\documentclass[]{article}


\usepackage{amsmath,amssymb,amsfonts}
\usepackage{graphicx}
\usepackage{color}
\usepackage{appendix}
\usepackage{authblk}
\usepackage{amsthm}
\usepackage{tikz}

\usepackage[english]{babel}
\usepackage[utf8]{inputenc}

\usepackage{cite}

\usepackage{geometry}
\geometry{margin=1in}

\newtheorem{theorem}{Theorem}

\newtheorem{definition}{Definition}

\newcommand{\nn}{\nonumber}

\def\a{\alpha}
\def\b{\beta}

\def\d{\delta}

\def\th{\theta}


\def\r{\rho}

\def\s{\sigma}


\def\p{\psi}

\def\La{\Lambda}
\def\P{\Psi}




\def\<{\langle}
\def\>{\rangle}
\def\Tr{Tr}


\def\Tr{{\rm Tr}}
\def\tr{{\rm tr}}

\def\cT{{\mathcal{T}}}

\def\vn{{\vec{n}}}
\def\vm{{\vec{m}}}
\def\hS{{\mathcal{S}}}
\def\hs{{\tilde{S}}}
\def\perm{{\textrm{perm}}}

\begin{document}
	
	\title{Partial Distinguishability as a Coherence Resource in Boson Sampling}
	
	\author[1]{Seungbeom Chin \thanks{sbthesy@gmail.com}}
	\author[2,3]{Joonsuk Huh \thanks{joonsukhuh@gmail.com}}
	\affil[1]{Department of Electrical and Computer Engineering, Sungkyunkwan University, Suwon 16419, Korea}
	
	\affil[2]{Department of Chemistry, Sungkyunkwan University, Suwon 16419, Korea }
	
	\affil[3]{ SKKU Advanced Institute of Nanotechnology (SAINT), Sungkyunkwan University, Suwon 16419, Korea}
	\maketitle

	\begin{abstract}
	Quantum coherence is a useful resource that is consumed to 
	accomplish several tasks that classical devices are hard to fulfill. Especially, it is considered to be the origin of quantum speedup for many computational algorithms. In this work, we interpret the computational time cost of boson sampling with partially distinguishable photons from the perspective of coherence resource theory. With incoherent operations that preserve the diagonal elements of quantum states up to permutation, which we name \emph{permuted genuinely incoherent operation} (pGIO), we present some evidence that the decrease of coherence corresponds to a computationally less complex system of partially distinguishable boson sampling. Our result shows that coherence is one of crucial resources for the computational time cost of boson sampling. We expect our work presents an insight to understand the quantum complexity of the linear optical network system.
\end{abstract}

{\bf Keywords:} coherence resource theory, linear optics, boson sampling, partial distinguishability, permuted genuinely incoherent operation

\section{Introduction}
Boson sampling (BS) \cite{aaronson2011computational} is a non-universal quantum computing device that can be easily realized with the quantum linear optical network (LON). The BS process has some practical advantage over known quantum algorithms using universal quantum computers, e.g., Shor algorithm. Indeed, it can be implemented with a more feasible system (LON) than the algorithms based on universal quantum computers. In the original BS setup proposed in Ref. \cite{aaronson2011computational}, $N$ single-photon states are initially prepared in $M$ ($\gg N $) input modes. The photons are injected into a LON that generates interference by unitary operations and then are detected in $M$ output modes. As a result, the transition amplitude between the product of single photon states is hard to simulate with classical Turing machines. Therefore, BS seems to be a strong candidate to refute the Extended Church-Turing Thesis (ECT), which conjectures that a Turing machine can efficiently simulate any efficient computational process in the real world.

However, the BS process should fulfill several conditions for its computational hardness: low photon density ($M \gg N$), complete photon indistinguishability, and randomness of unitary operations. Most of all, the complete photon indistinguishability condition is hard to meet since photons usually carry some internal degrees of freedom that makes them \emph{partially distinguishable} in experimental realizations \cite{toppel2012all}. There have been many quantitative approaches to analyze the multiphoton interference phenomena of partial distinguishable photons in LON \cite{tan20133, tillmann2015generalized, de2014coincidence, rohde2015boson,tamma2014multiboson, tamma2015v, shchesnovich2014,shchesnovich2015partial,tichy2015sampling}, in which the \emph{distinguishability matrix} is introduced to evaluate the mutual distinguishability of each particle to others. The transition probability can be calculated in terms of the partial distinguishable matrix, which is denoted by $\hS$ in Ref. \cite{tichy2015sampling}. The authors of Ref. \cite{renema2017efficient} sought a range of partial distinguishability under which a classical simulation of BS becomes efficient. They first considered a special form of $\hS$ that has the form of the interpolation of the fully distinguishable and fully indistinguishable cases with one continuous real parameter $x$ ($0 \le x \le 1$). Then they applied the result to the generalized $\hS$ by imposing the efficient upper bound that also can be controlled by the parameter. Actually, when the distinguishability is evaluted with just one parameter, we can assert that the amount of $x$ directly determines the degree of indistinguishability (DOI).

However, when $\hS$ is required to be in a generalized form with more parameters, it is not that straightforward to determine DOI for the given matrix. For such a case, we need to find some scalar measures that can compare DOI between different forms of $\hS$.
Therefore, for a comprehensive discussion on the relation between particle distinguishability and BS complexity, embracing that in Ref. \cite{renema2017efficient}, we first need to present a generic and rigorous criterion of DOI for the generalized $\hS$.
In this work, we approach this problem by exploiting the \emph{quantum coherence resource theory}, which was quantitatively formalized first in Ref. \cite{baumgratz2014quantifying} (for a general review for the theory, see Ref. \cite{streltsov2017colloquium}).

Quantum resource theories provide the quantitative criteria to compare the amount of resources in different quantum systems.  They also equip one to analyze relations between the variation of resources and the efficiency (speed or complexity) of some quantum tasks (see, e.g., Ref.~\cite{chitambar2018quantum}). Among them,
the coherence resource theory has incoherent states as free states (which possess no quantum feature) and incoherence operations as free operations (which decrease the quantum feature of given states when applied to the states). Our idea is to treat the amount of particle distinguishability for $\hS$ as quantum coherence. On the other hand, since $\hS$ should remain in the form of a Gram matrix with all diagonal elements  1, we need a very restricted set of incoherent operations to apply the coherence resource theory to our current system. We name such a class \emph{the permuted genuinely incoherent operation} (pGIO). This is a slightly extended set of operations from the genuinely coherent operation (GIO) \cite{de2016genuine}, by adding permutations. An intriguing property of pGIO is that it is the intersection of the strictly incoherent operation (SIO) set \cite{yadin2016quantum} and the fully incoherent operation (FIO) set \cite{de2016genuine}.

The main focus of this work is to identify the behavior of transition probability of partially indistinguishable photons under a pGIO. We expect that \emph{when a pGIO is applied to a distinguishability matrix $\hS$, this operation can be exploited to decrease the computational time cost of BS with partially distinguishable photons}. Examples to support this conjecture are presented here.  By introducing some permuted genuine (pG) coherence monotones, we will analyze the behavior of the upper bound of the transition probability and the runtime for exactly simulating the transition probability of a given BS system. By interpreting the particle indistinguishability in the framework of coherence resource theory, one can state that \emph{quantum coherence plays a role of resource for the passive scattering process of linear optics to become classically hard simulate}. We expect our present work would provide an insight into understanding the computational complexity of the linear optical network system with quantum resource theories.

Our work is organized as follows: In Section \ref{partial}, we  briefly review the concept of partial distinguishability matrix and its relation to the scattering process in LON. In Section \ref{pGIO}, we define pGIO and discuss some interesting properties of the operation class. In Section \ref{pGIOapp}, we analyze the relation between pGIO and the computational time cost of partially distinguishable BS. In Section \ref{con}, we present a summary and suggest some possible future research.

\section{Partial distinguishability in LON}\label{partial}

We first introduce the concept of $N\times N $ distinguishability matrix $\hS$ and explain how it affects the transition probability in multi-mode linear optical network systems \cite{tichy2015sampling}. 
We discuss the role of partial distinguishability among photons in the original Fock state BS by Aaronson and Arkhipov \cite{aaronson2011computational}, in which each input and output mode contains no more than one photon (the complexity of the arbitrary photon distribution case with fully indistinguishable photons is discussed in Ref. \cite{yung2016universal, chin2017majorization,chin2018gen}).
The relation between distinguishability matrix $\hS$ and density matrix coherence is also explained here.

\subsection{Partial distinguishability matrix $\hS$ and the transition probability of LON }

All the possible internal degrees of freedom (e.g., angular frequency, polarization) of the $i$th photon ($1\le i \le N$) can be described in general with a normalized ``internal'' state $|\phi_i\>$. 
Then the mutual distinguishability of $N$ photons is represented with the \emph{distinguishability matrix} $\hS$ with the elements
\begin{align}
\hS_{ij} = \< \phi_i |\phi_j \>.
\end{align}
We can directly see that $\hS$ is a Gram matrix, and hence positive semidefinite (PSD). 
Since $|\phi_i\>$ are non-orthogonal normalized states, we have $0\le |\hS_{ij}|\le 1$ and $\hS_{ii}=1$ for all $i$. When all internal states are orthogonal to each other (all particles are completely distinguishable), we have $\hS = \mathbb{I}$ ($\hS_{ij}=\d_{ij}$). On the other hand, when all internal states are proportional to each other (completely indistinguishable), we have $\hS_{ij} = 1$ for all $i$ and $j$.

With a nontrivial $\hS$,
the transition probability of Fock state BS is not expressed as an absolute square of transition amplitude: the probability for a post-selected photon distribution is in general given by \cite{tichy2015sampling}
\begin{align}\label{transitionprobability}
P(\vn,\vm) = \sum_{\s\in S_N }\Big(\prod_i \hS_{i, \s_i} \Big)\perm(V\odot V^*_{\s, \mathbb{I}}).
\end{align}
Here $V$ is the submatrix of the linear optical unitary operation $U$ that actually generates mode interference. More specifically, when the input and output photon distribution vectors are given by $\vn = (n_1,n_2,\cdots , n_M)$ and $\vm = (m_1,m_2,\cdots , m_M)$ respectively, with  $N$ ($=\sum_{i}n_i = \sum_i m_i$) photons and $M$ modes,    we have $V=U_{\vn,\vm}$ ($N\times N$ submatrix of $U$ that has $n_i$ ($m_i$) of the $i$th row (column) of $U$). $V^*_{\s, \mathbb{I}}$ is  the complex conjugate of $V$ with columns permuted along a specific permutation $\s$. The entrywise Schur product (or Hadamard product) is denoted by $\odot$, and the summation of permutations is over all elements of the permutation group $S_N$. When particles are fully indistinguishable ($\hS_{ij} =1$ for all $i$ and $j$), $P(\vn,\vm)= |\perm(V)|^2$. When particles are fully distinguishable ($\hS_{ij}=\d_{ij}$), $P(\vn,\vm) = \perm(|V|^2)$.

\subsection{$\hS$ and coherence}

We can understand the relation between $\hS$ and density matrices from the viewpoint of coherence, which is indispensable  for the connection of our system to the coherence resource theory. 
Coherence is a quantum feature that indicates the degree of superposition among orthogonal quantum states. There exists a duality between the quantum coherence and path indistinguishability of multi-slit (or multi-mode) interference phenomena \cite{bera2015duality,bagan2016relations, chin2017generalized}.
To see the relation in more detail, consider a quanton (quantum wave/particle dual existence) interferes by a $N$ dimensional multi-slit. A quanton state that passes through the $i$th slit is denoted as $|i\>$ ($1\le i \le N$). Then the most general quanton state is given by
\begin{align}
|\p_{in}\>= \sum_{i=1}^N c_i|i\>. \qquad (\sum_{i}|c_i|^2 = 1)
\end{align}
$\{|i\>\}_{i=1}^{N}$ constructs an orthonormal basis set. To know which slit a quanton passes, we need a detecter that is entangled to the quanton, which makes the total state including the quanton and detector as 
\begin{align}
|\P\> = \sum_{i=1}^Nc_i|i\>\otimes |\phi_i\>_{D} 
\end{align}
$|\phi_i\>_D$ is a state for a detector attatched to the $i$th slit, which is normalized but not necessarily orthogonal according to its resolution. To acquire the path information of a quanton, we need to partial trace over the detector state,
\begin{align}
\r_r = \tr_D(|\P\>\<\P|) = \sum_{i,j}c_ic_j^* \<\phi_j|\phi_i\>|i\>\<j|.
\end{align}  Then one can show that the coherence of $\r_r$ determines the path distinguishability of the quanton \cite{bera2015duality, bagan2016relations, chin2017generalized}. Note that when $|\p_{in}\>$ is maximally coherent, i.e., $|\p_{in}\> =\sum_{i=1}^N\frac{1}{\sqrt{N}}|i\>$, $\r_r$ is proportional to the complex conjugate of $\hS$:
\begin{align}
\r_{r} \xrightarrow{\textrm{maximally coherent}} (\r_r)^{max} =\frac{1}{N} \hS^* . 
\end{align}
Considering the complex conjugation comes from the definition of $\hS$, we can state that the partial distinguishability of bosons for given optical modes can be equivalently treated as the spatial coherence of the bosons among the modes. 
This relation renders the application of coherence resource theory to the multi-mode scattering of partial distinguishable particles. 
We define $\hs \equiv (\r_r)^{max}=\frac{1}{N}\hS^{*}$, which satisfies the conditions for a density matrix.

\section{Coherence resource theory and permuted genuinely  incoherence operation (pGIO)}\label{pGIO} 
Coherence resource theory \cite{abergj, baumgratz2014quantifying, winter2016operational, yadin2016quantum} recently has drawn extensive
attention, and it turns out that the coherence enhances the efficiency of various quantum computational tasks such as Deutsch-Jozsa algorithm \cite{hillery2016} and Grover algorithm \cite{shi2017,chin2017coherence}. 
Coherence depends on a specific set of computational bases, and we define an \emph{incoherent state} in $d$-dimensional Hibert state $\mathcal{H}$ as a diagonalized state in the computational basis set $\{|i\>\}_{i=1}^d$. 
The standard incoherent operations (IO) \cite{baumgratz2014quantifying} corresponds to the following Kraus decomposition:
\begin{align}
\La_I[\r] = \sum_n K_n \r K_n^{\dagger}
\end{align}
with $\sum_n K_n K_n^{\dagger}=\mathbb{I} $ and $ K_n\hat{\delta} K_n^{\dagger}/tr[K_n\hat{\d}K_n^{\dagger}] =\hat{\delta '}$ ($\delta,\delta'$ are both incoherent states). The Kaus operators are explicitly expressed as $K_n = \sum_i c_n^i|f_i^n\>\<i|$ ($f^n$ is a function that sends $i$ to $i'$, not necessarily one-to-one). On the other hand, other kinds of incoherent operations have been suggested according to various physical motivations. Strictly incoherent operations (SIO) are those which cannot use the coherence in input states, which has $K_n = \sum_i c_n^i|\s_i^n\>\<i|$ ($\s^n$ is  a permutation, hence one-to-one now) \cite{winter2016operational, yadin2016quantum}. Genuinely incoherent operations (GIO) preserve all incoherent states, which has $K_n = \sum_i c_n^i|i\>\<i|$ \cite{de2016genuine}. Fully incoherent operations (FIO) have the most general form that are incoherent for all $K_n$, with  $K_n = \sum_i c_n^i|f_i\>\<i|$, i.e., $K_n$ have the same matrix form for all $n$ \cite{de2016genuine} (a GIO is naturally an FIO) \footnote{Other crucial classes of incoherent operations are  physical incohent operations (PIO) \cite{chitambar2016comparison,chitambar2016critical}, dephasing-covariant incoherent operations (DIO) \cite{chitambar2016comparison,chitambar2016critical}, and translationally-invariant operations (TIO) \cite{gour2009measuring,marvian2016quantify, marvian2014extending, marvian2016quantum}.}.  

\subsection{Permuted genuinely incoherent operations (pGIO)}

The close relation between coherence and indistinguishability was first pointed out in Ref. \cite{mandel1991} for the case of one photon in two modes, but the application of coherence to the partial distinguishability case requires a very different mathematical approach.
Here we should analyze the behavior of the transition probablity Eq. \eqref{transitionprobability} when the coherence of $\tilde{S}$ changes according to some incoherence operations.    
However, since the diagonal elements of $\hs$ should be preserved under any operation as $1/N$ for all $i$, we need a special kind of incoherent operations that satisfy this restriction to analyze our physical system. Here we suggest \emph{permuted genuinely incoherent operations} (pGIO) as such a class of incoherence operations:
\begin{definition}
	Permuted genuinely incoherent operations (pGIO) are those which preserve the diagonal elements of given states within permutation. 
\end{definition}
It is direct to note that the set of pGIO includes that of GIO. The following inclusion relation also hold:
\begin{theorem}
	The set of $pGIO$ is the intersection of $SIO$ and $FIO$ (Figure 1).
	\begin{proof} The Kraus operators that satisfy both the conditions for SIO and  FIO are expressed as $K_n= \sum_i c_n^i|\s_i\> \<i|$, which can be decomposed as 
		\begin{align}
		K_n= \sum_i c_n^i|\s_i\> \<i| =\sum_{i}|\s_i\>\<i| \sum_jc_n^j|j\>\<j|.
		\end{align}
		The final expression represents the definition of pGIO.
	\end{proof} 
\end{theorem}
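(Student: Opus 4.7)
The plan is to prove set equality by showing both inclusions, using the Kraus-operator characterizations of SIO and FIO already recalled in the paper. The forward inclusion $\mathrm{SIO}\cap\mathrm{FIO}\subseteq\mathrm{pGIO}$ is essentially the short calculation the authors display, so I would first pin down exactly what Kraus structure the two conditions jointly force. SIO asks that $K_n=\sum_i c_n^i|\sigma^n_i\>\<i|$ where each $\sigma^n$ is a permutation that may depend on the Kraus index $n$, while FIO asks that $K_n=\sum_i c_n^i|f_i\>\<i|$ with $f$ a single function shared by all $n$. Requiring both simultaneously forces $\sigma^n=f$ for every $n$, so $f$ must itself be a permutation $\sigma$. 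Then each Kraus operator factors as $K_n=P_\sigma D_n$ with $P_\sigma=\sum_i|\sigma_i\>\<i|$ and $D_n=\sum_j c_n^j|j\>\<j|$, exactly the decomposition quoted in the excerpt.

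Having this factorization, verifying membership in pGIO is a one-line check: for any incoherent $\hat\delta=\sum_i p_i|i\>\<i|$, one computes
\begin{equation}
\sum_n K_n\hat\delta K_n^\dagger=\sum_i\Big(\sum_n|c_n^i|^2\Big)p_i\,|\sigma_i\>\<\sigma_i|=\sum_i p_i|\sigma_i\>\<\sigma_i|,
\end{equation}
using the trace-preservation condition $\sum_n|c_n^i|^2=1$. Thus every diagonal state is mapped to the same diagonal state with its entries permuted by $\sigma$, which is the defining property of pGIO.

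For the reverse inclusion $\mathrm{pGIO}\subseteq\mathrm{SIO}\cap\mathrm{FIO}$, I would argue that a pGIO $\Lambda$ permutes the extreme incoherent states $\{|i\>\<i|\}$ by some single permutation $\sigma$; composing with $P_\sigma^{-1}$ yields an operation that fixes every incoherent state pointwise, hence is a GIO. Invoking the known characterization that GIO Kraus operators must be diagonal in the fixed basis, say $K'_n=\sum_i c_n^i|i\>\<i|$, and undoing the composition gives $K_n=P_\sigma K'_n=\sum_i c_n^i|\sigma_i\>\<i|$, which is simultaneously of SIO form (with $\sigma^n=\sigma$) and of FIO form (with $f=\sigma$).

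The main obstacle I anticipate is the reverse direction, specifically justifying that a single global permutation $\sigma$ governs the action on all diagonal states rather than one permutation per input; this requires using the linearity of $\Lambda$ to extend the action on basis states $|i\>\<i|$ to arbitrary incoherent mixtures, and then invoking the GIO-to-diagonal-Kraus characterization cleanly. The forward direction, by contrast, is essentially bookkeeping once the joint Kraus form is written down.
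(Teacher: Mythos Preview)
Your proposal is correct and, in fact, more complete than the paper's own argument. For the forward inclusion $\mathrm{SIO}\cap\mathrm{FIO}\subseteq\mathrm{pGIO}$ you reproduce exactly the paper's step---combining the SIO permutation structure with the FIO requirement of a shared function forces a single permutation $\sigma$ and the factorization $K_n=P_\sigma D_n$---but you additionally verify, via the trace-preservation identity $\sum_n|c_n^i|^2=1$, that this Kraus form really does permute the diagonal of every state, which the paper simply asserts by saying the factored form ``represents the definition of pGIO.''

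The substantive difference is that you treat the reverse inclusion $\mathrm{pGIO}\subseteq\mathrm{SIO}\cap\mathrm{FIO}$ as a separate claim requiring proof, whereas the paper's one-line argument tacitly identifies the operational definition of pGIO (preserving diagonals up to permutation) with the Kraus characterization $K_n=P_\sigma D_n$ and so never addresses it. Your route---observe that a pGIO must send each $|i\rangle\langle i|$ to some $|\sigma_i\rangle\langle\sigma_i|$, use linearity to pin down a single permutation $\sigma$, compose with $P_\sigma^{-1}$ to obtain a GIO, and then invoke the diagonal-Kraus characterization of GIO from \cite{de2016genuine}---is the natural way to close this gap, and the obstacle you flag (ensuring one global $\sigma$ rather than a state-dependent one) is exactly the point that needs care. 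In short: same core idea for the direction the paper proves, plus a genuine completion of the equivalence that the paper leaves implicit.
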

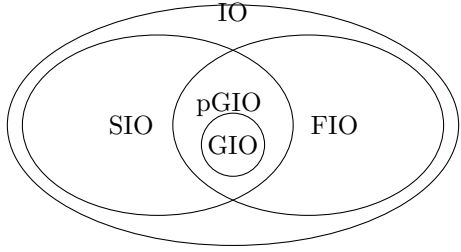
\begin{figure}
	\centering
	\begin{tikzpicture}
	\draw (0,0) circle [x radius=3cm, y radius=1.6cm]  node[align=center, above] {IO\\$ $ \\$ $\\$ $ };
	\draw (-1,0) circle [x radius= 18mm, y radius=12mm] node[align= center] {SIO$\qquad$};
	\draw (1,0) circle [x radius=18mm, y radius=12mm] node[align= center] {$\qquad$FIO} node[align=left, above] {pGIO$\qquad \qquad\quad\quad  $};	
	\draw (0,-0.26) circle [radius=4.2mm] node[align=center] {GIO}; 
	\end{tikzpicture}
	\caption{A Venn diagram for the relation of incoherent operations.}
	\label{Venn}
\end{figure}


\begin{theorem}
	Any state that has the form of $\hs$ can be obtained by taking pGIOs to a maximally coherent state $\r_{M}$ ($\r_{M ij}=e^{i(\th_i-\th_j)}/N$ for all $i$ and $j$).  
\end{theorem}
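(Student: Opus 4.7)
The plan is to construct an explicit pGIO---in fact a GIO, since $\mathrm{GIO}\subset\mathrm{pGIO}$---that maps $\r_M$ to any prescribed $\hs$. First I would reduce to the case of trivial phases $\th_i=0$: writing $\r_M = V\r_M^{(0)}V^{\dagger}$ with $V=\mathrm{diag}(e^{i\th_1},\dots,e^{i\th_N})$ and $\r_M^{(0)}$ the ``bare'' maximally coherent state whose density matrix has every entry equal to $1/N$, the conjugation by $V$ is itself a single-Kraus GIO. So it suffices to build a pGIO sending $\r_M^{(0)}$ to $\hs$, then compose with the diagonal unitary to accommodate general phases.

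Next, I would exploit the key fact that $N\hs = \hS^{*}$ is positive semidefinite (the entrywise complex conjugate of a Gram matrix is again a Gram matrix) with all diagonal entries equal to $1$. Hence $\hS^{*}$ admits a factorization $\hS^{*}=CC^{\dagger}$ for some complex $N\times K$ matrix $C$; the principal square root, the Cholesky factor, or any spectral decomposition will do. Writing $c_n^i := C_{in}$, this factorization encodes the two identities
\begin{equation*}
\sum_{n} c_n^i (c_n^j)^{*} = (\hS^{*})_{ij}, \qquad \sum_n |c_n^i|^2 = 1 \ \text{for all } i,
\end{equation*}
where the second one is exactly the $\hS_{ii}=1$ normalization.

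With this data, I would define Kraus operators $K_n=\sum_i c_n^i|i\>\<i|$ for $n=1,\dots,K$, which are manifestly of GIO (hence pGIO) form with trivial permutation. Completeness $\sum_n K_n^{\dagger}K_n=\mathbb{I}$ is immediate from the diagonal normalization, and a direct computation yields
\begin{equation*}
\Big(\sum_n K_n\, \r_M^{(0)}\, K_n^{\dagger}\Big)_{ij} = \frac{1}{N}\sum_n c_n^i(c_n^j)^{*} = \frac{1}{N}(\hS^{*})_{ij}=\hs_{ij},
\end{equation*}
which is the desired image. Composition with the diagonal unitary from the first step then handles arbitrary $\th_i$.

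I do not expect a conceptual obstacle; the only nontrivial input is linear-algebraic, namely the PSD-with-unit-diagonal factorization of $\hS^{*}$. Both properties are forced by the physical definition $\hS_{ij}=\<\phi_i|\phi_j\>$, so the remaining verifications---that the $K_n$ satisfy simultaneously the SIO and FIO conditions (trivially, via the identity permutation) and that the channel is CPTP---are automatic once the factorization is in place.
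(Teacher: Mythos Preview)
Your proposal is correct. The approach is essentially equivalent to the paper's, though packaged differently. The paper invokes a result of de Vicente and Streltsov stating that any GIO acts on density matrices as $\r\mapsto A\odot\r$ for some Hermitian positive semidefinite matrix $A$ with unit diagonal (pGIO then adds a permutation of the indices), and observes that choosing $A$ appropriately turns $\r_M$ into the desired $\hs$. Your argument unpacks precisely the constructive direction of that characterization: the factorization $\hS^{*}=CC^{\dagger}$ furnishes diagonal Kraus operators $K_n=\mathrm{diag}(c_n^1,\dots,c_n^N)$ whose associated channel is exactly $\r\mapsto(CC^{\dagger})\odot\r$. So the underlying linear algebra is identical; you have simply made the Kraus operators explicit rather than citing the Hadamard-product form of GIO. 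The payoff of your version is self-containment---no external reference is needed---whereas the paper's version is shorter and foregrounds the structural picture (pGIO action equals Schur multiplication by a correlation matrix, up to permutation) that it reuses elsewhere.
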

\begin{proof} From Theorem 2 of Ref. \cite{de2016genuine},
	a density matrix $\r = \sum_{ij}\r_{ij}|i\>\<j|$ transforms under pGIO to $\r' =\sum_{ij}(A\odot\r)_{\s_i\s_j}|i\>\<j|$ where $A$ is a Hermitian positive semidefinite matrix (therefore a Gram matrix) with $A_{ii}=1$ for all $i$. It is straightforward that a matrix whose elements are the Hadamard product of two Gram matrices is also a Gram matrix \cite{marvian2013theory}. A Gram matrix hence transforms to another Gram matrix under a pGIO. As a result, any state of the form $\hS$ can be obtained with pGIOs from $\r_M$. 
\end{proof}
Note that, by permutation, an unspeakable resource theory (GIO) becomes a speakable theory (pGIO) \footnote{The speakable resources are independent of the physical encoding, i.e., all basis are equivalent, while the unspeakable depend on the specific degrees of freedom. For a more detailed explanation, see Ref. \cite{marvian2016quantify, chitambar2018quantum}. }.

To see the relation between pGIO and  partially distinguishable BS, it is convenient to find quantum quantities that can evaluate the degree of coherence (permuted genuine coherence motonones). 
Following the definitions for established incoherence operations including IO \cite{baumgratz2014quantifying}, the conditions that a permuted genuine (pG) coherence monotone $C_{pG}$ must fulfill are as follows: 

(pG1) Nonnegativity: $C_{pG}(\r) \ge 0$, and $C_{pG}(\r)=0$ if and only if $\r$ is incoherent.
(pG2) Monotonicity:  $C_{pG}(\La(\r)) \le C_{pG}(\r)$ for any pGIO $\La$.
(pG3) Strong monotonicity: $C_{pG}$ does not increase under selective operations for any Kaus operator set $\{K_n \}$, i.e,  $\sum_nq_nC_{pG}(\r'_n) \le C_{pG}(\r)$ with  $q_n= \Tr(K_n\r K_n^{\dagger})$ and  $\r'_n = K_n\r K_n^{\dagger}/q_n$.
(pG4) Convexity: $C_{pG}(\sum_np_n \r_n) \le \sum_{n}p_nC_{pG}(\r_n)$.

For a quantity to be a pG coherence monotone, it must mininally satisfies (pG1) and (pG2).

To understand the role of pGIO in BS problems, we need to find monotones that also have  straightforward relations to the scattering process of BS. 
Here we suggest three such pG coherence monotones, $\mathcal{N}(\r)$ (the number of nonzero entries of $\r$), $\perm(|\r|)$ (permanent of the matrix whose elements are the absolute values of the entries of $\r$), and $J_{a}(\r)$ which is defined as follows:
\begin{definition}\label{def2}
	With a scalar value $J_{\s^a}(\r)\equiv |\prod_{i=1}^N \r_{i\s^a_i}|$ where $\s^a$ is a permutation that have (N-a)-fixed points (in other words,  $\s^a$ is a permutation without changing $(N-a)$ elements among $N$ elements), we define $J_{a}(\r) =\max(J_{\s^a}(\r))$, i.e., the maximal value of $J_{\s^a}(\r)$ among all possible permutations with $(N-a)$ fixed points. 
\end{definition}
Then we have the following theorem.
\begin{theorem}\label{monotones}
	$\mathcal{N}(\r)$, $\perm(|\r|)$, and $J_a(\r)$  are  pG coherence monotones that satisfy (pG1) and (pG2). 
\end{theorem}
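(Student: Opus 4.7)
The plan is to reduce the verification of both (pG1) and (pG2) to a single structural fact about the action of a pGIO, and then check each candidate separately. By Theorem~2 of Ref.~\cite{de2016genuine}, already invoked in the proof of the preceding Theorem, a pGIO sends a state $\r$ to $\r'$ with $\r'_{ij}=A_{\s_i\s_j}\,\r_{\s_i\s_j}$, where $\s\in S_N$ is a fixed permutation and $A$ is a Gram matrix with $A_{ii}=1$ for all $i$. The Gram inequality gives $|A_{ij}|\le\sqrt{A_{ii}A_{jj}}=1$, so $\Lambda$ acts by a simultaneous row/column relabelling by $\s$ composed with an entrywise contraction through $A$. Consequently, any quantity that is (a) invariant under a simultaneous permutation of rows and columns and (b) non-increasing when every entry is multiplied by a scalar of modulus at most one will automatically satisfy (pG2).

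For $\mathcal{N}(\r)$, the permutation is a bijection of the $N^2$ entry positions, so it preserves the number of nonzero entries, while the Hadamard factor can only convert a nonzero entry into zero and never the reverse; hence $\mathcal{N}(\r')\le\mathcal{N}(\r)$. For $\perm(|\r|)$, the permanent is invariant under any simultaneous permutation of rows and columns, and on a matrix with nonnegative entries it is monotone in each entry; together with $|\r'_{ij}|=|A_{\s_i\s_j}||\r_{\s_i\s_j}|\le|\r_{\s_i\s_j}|$ this yields $\perm(|\r'|)\le\perm(|\r|)$. Condition (pG1) for $\mathcal{N}$ follows by reading it as counting off-diagonal nonzero entries, which vanishes precisely on incoherent states.

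The substantive step is $J_a$, where the basis-relabelling $\s$ couples nontrivially to the internal permutation $\s^a$ used in Definition~\ref{def2}. For each $\s^a$ with $N-a$ fixed points, direct substitution gives
\[
J_{\s^a}(\r')=\Big|\prod_{i=1}^N A_{\s_i,\s_{\s^a_i}}\,\r_{\s_i,\s_{\s^a_i}}\Big|.
\]
Reindexing by $k=\s(i)$ and setting $\t=\s\s^a\s^{-1}$ turns the product into $\prod_k A_{k,\t(k)}\r_{k,\t(k)}$. The key observation is that $\t$ is conjugate to $\s^a$ in $S_N$ and therefore has identical cycle type, so in particular $\t$ again has $N-a$ fixed points. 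Using $|A_{k,\t(k)}|\le 1$ and then taking the maximum over all admissible $\s^a$ (equivalently, over all $\t$ of the prescribed cycle type) yields $J_{\s^a}(\r')\le J_\t(\r)\le J_a(\r)$, and hence $J_a(\r')\le J_a(\r)$. For (pG1) with $a\ge 1$, any such $\t$ must move at least one index, so the defining product vanishes on every diagonal $\r$.

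The step I expect to be the main obstacle is exactly this conjugation trick for $J_a$: one has to notice that the basis-relabelling $\s$ conjugates the fixed permutation $\s^a$ indexing $J_{\s^a}$ into a new permutation $\t$ of the same cycle type, so that the maximum defining $J_a$ is closed under the induced action. Once this is in place, the other two quantities reduce to the elementary facts that $\perm$ is row/column-permutation-invariant and entrywise monotone in nonnegative matrices, while $\mathcal{N}$ is a purely combinatorial count; nothing beyond the Gram bound $|A_{ij}|\le 1$ is required.
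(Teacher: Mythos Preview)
Your argument is correct and follows the paper's approach: exploit the structural form $\rho'_{ij}=(A\odot\rho)_{\sigma_i\sigma_j}$ with $|A_{ij}|\le 1$ and check each candidate separately. For $\mathcal{N}$ and $\perm(|\rho|)$ your reasoning matches the paper's almost verbatim.

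For $J_a$ you are in fact more precise than the paper. The paper's proof asserts that each individual $J_{\sigma^a}$ satisfies (pG2), but this is not literally true when the pGIO carries a nontrivial relabelling $\sigma$: the product $\prod_i\rho'_{i,\sigma^a_i}$ then involves entries of $\rho$ indexed by the conjugate permutation $\tau=\sigma\sigma^a\sigma^{-1}$, not by $\sigma^a$ itself, and $J_\tau(\rho)$ can exceed $J_{\sigma^a}(\rho)$. Your conjugation step---observing that $\tau$ has the same cycle type as $\sigma^a$ and hence again $N-a$ fixed points, so that $J_{\sigma^a}(\rho')\le J_\tau(\rho)\le J_a(\rho)$---is exactly what is needed to make the bound go through; the paper's follow-up remark that ``the maximal permutation may change'' is presumably gesturing at this without spelling it out. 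So the route is the same, but your treatment of $J_a$ fills a small gap in the paper's sketch.
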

\begin{proof}
	1. $\mathcal{N}(\r)$:	$(pG1)$ is trivially satisfied. Since all zero entries of the density matrix $\r$ only change their places but not the amount, $(pG2)$ is also true.\\
	2. $\perm(|\r|)$: (pG1) is again trivially satisfied. 
	Under a pGIO, elements $\r_{ij}$ of a density matrix $\r$ are transformed to $\r'_{ij} = (A\odot \r)_{\s_i\s_j}$, with $|A_{ij}| \le 1$ for all $i$ and $j$. Therefore, the inequality $\perm(|\r'|)\le \perm(|\r|)$ always holds, and $(pG2)$ is satisfied.\\
	3.	 $J_{a}(\r)$: it is straightforward to see that $J_{\s_a}(\r)$ satisfies $(pG1)$ and $(pG2)$ using $|A_{ij}|\le 1$. Then even if the maximal permutation changes for $\max(J_{\s_a}(\r) )$, $J_{a}(\r')$ cannot be greater than $J_{a}(\r)$.
\end{proof}

One can ask about the actual physical implication of pGIO in the multimode scattering process of partially distinguishable photons. It can be answered by considering that pGIO on  $\hs$ is equivalent to the Hadamard product of two Gram matrices. Therefore, for a given internal state $|\phi_i\>$ that determines $\hS$, we can state that \emph{a pGIO on $\hS$ is to attach additional degrees of freedom, e.g., $|\p_i\>$ so that a new internal state becomes $|\phi_i\> \otimes |\p_i\>$}. And the particles are likely to be more distinguishable with more degrees of freedom, and the Gram matrix for the new initial state is the updated partial distinguishability matrix $\hS$.

Now we are ready to investigate the relation between pGIO and the computational cost of BS with partially distinguishable photons.

\section{pGIO in partially distinguishable boson sampling}\label{pGIOapp}

In this section, we analyze the behavior of the partially distinguishable BS transition probability under pGIO.
When a pGIO $\La$ is applied to a given partially distinguishable matrix $\hS$, the new bosonic system becomes more distinguishable (see the end of Sec. \ref{pGIO}) and pG monotones decrease by definition (see pG1 to pG3) as well. Hence, if the computational time cost of BS with partially distinguishable photons decreases as the pG monotones decreases, the relation supports our surmise that the decrease of coherence in the system (which is equivalent to the increase of distinguishability) depletes the computational complexity of BS. Thus one can state that the distinguishability of photons is exploited to reduce the computational time cost of BS.

The time cost of classically simulating the BS scattering process can be understood from two aspects, i.e., approximate and exact simulations. We can summarize the results of this section in intuitive expressions as:
\\
\\
1. The classical approximation of the BS transition amplitude becomes more efficient as pGIOs are applied to $\hS$, i.e., as photons become more distinguishable (\ref{approx}).
\\
2. The runtime for exactly simulating the BS transition amplitude becomes shorter as pGIOs are applied to $\hS$ (\ref{exact}).
\\ 

The monotones we have introduced in Section \ref{pGIO} ($\mathcal{N}$, $perm(|\r|)$ and $J_a$) are used to examine the relation between transition probability and pGIO.

\subsection{Approximating the transition probability and $J_a$}\label{approx}
Various quantities have been suggested as the DOI for multi-boson scattering experiments \cite{de2014coincidence, shchesnovich2014, shchesnovich2015partial, tichy2015sampling, walschaers2016many, brunner2017signatures} from different physical perspectives. Most of all, it is shown in Ref. \cite{tichy2015sampling} that  $\perm(|\hS|)$ is directly related to the upper bound of $P(\vn,\vm)$ (See Appendix \ref{bound} for a detailed analysis on the bound). 
On the other hand, one can consider a tighter bound of $P(\vn,\vm)$ that is more easily saturated by phase control. The bound divides the effect of indistinguishability from that of distinguishabilty and also reveals the monotonic effect of pGIO on $\hS$ manifestly:
\begin{align}\label{tighter}
P(\vn,\vm) \le & \sum_{\s\in S_N} \big|\perm(V\odot V_{\s,\mathbb{I}}^*)\big| J_\s  \quad \Big( \le P_\mathbb{I}[\perm(|\hS|)] \Big)  \nn \\
=& \perm(V\odot V^*)+ \sum_{\s\in S_{N}, \s\neq \mathbb{I}} \big|\perm(V\odot V_{\s,\mathbb{I}}^*)\big| J_\s, 
\end{align}
where $J_\s \equiv |\prod_i \hS_{i,\s_i}|$. Note that the first term in the last line of Eq. \eqref{tighter} corresponds to the classical contribution (distinguishable scattering), and the second term to the nonclassical contribution (path interference by indistinguishability).

Since $J_\s$ is multiplied by each term that represents the effect of interference in the last term of Eq. \eqref{tighter}, with Theorem \ref{monotones}, we can see that \emph{the impact of interference in LON decreases under any pGIO}. We speculate that the reduction of interference results in a computationally less complex scattering process  (see, e.g., \cite{lloyd1999quantum, stahlke2014quantum}) \footnote{This can be compared to the coherence theory of wave-particle duality in multi-slit experiments \cite{bera2015duality, bagan2016relations, chin2017generalized, biswas2017interferometric}. These researches showed that the interference phenomena (wave-like property) of a quanton through a multi-slit path inceases as the degree of coherence increase, which is analogous to our case with multimode linear optical network. }. The following analysis supports this assumption.


Using $S_{ii} =1$ for all $i$, $J_\s$ can be ordered along the number of fixed points in permutations as $J_{\s^{}a}$ with $a=0,2,\cdots, N$ (see Definition \ref{def2}). For example, when only two points $i$ and $j$ are permuted ((N-2)-points are fixed), $J_{\s^2} =|\hS_{ij}|^2$, etc. Therefore, we can enumerate Eq. \eqref{tighter} as
\begin{align}\label{pordered}
P(\vn,\vm) \le  \sum_{a}\sum_{\s^a}J_{\s^a} \perm(V\odot V_{\s^a,\mathbb{I}}^*) \equiv \sum_a Z_a,
\end{align} 
Since the order of $|\hS_{ij}|$ ($\le 1$) increases as $a$ increases,  $Z_a$ with lower $a$ makes a greater contribution to the probablity on average \cite{tichy2015sampling,rohde2015boson, renema2017efficient}.

The condition for efficiently approximating the transition probability with the lowests $k$ term of $Z_a$, i.e., $P_k = \sum_{a}^{k\ge N}Z_a$, is given in Ref. \cite{renema2017efficient}.  Since the scattering matrix is chosen totally randomly, the inequality in Eq. \eqref{pordered} becomes an equality for real $x_{ij}$ without loss of generality. The $k$-photon approximation for Eq. \eqref{pordered} is obtained by setting max$(J_{\s^{k}})^{1/k}\equiv (J_k)^{1/k}\equiv x$ ($x $ is real and $0 \le x \le 1$). 
As $x$ becomes small, which is achieved by a pGIO,  the approximation becomes efficient with lower k. Indeed, FIG. 4 of Ref. \cite{renema2017efficient} shows that the approximation level $k$ decreases as $x$ decreases.
Since $x= J_{k}^{1/k}$ is pG coherence monotone by Theorem \ref{monotones}, we can state that \emph{the pGIO on an arbitary $\hS$ decreases the computational time cost of approximating the transition process.} 


\subsection{Exact classical algorithm for simulating transition probability  and $\mathcal{N}(\hs)$}\label{exact}
Here we show that the decrease of $\mathcal{N}(\hs)$ permits a less expensive algorithm to exactly simulate the transition probability.
The transition probability of the partial distinguishable BS (Eq. \eqref{transitionprobability}) can be rewritten as
\begin{align}
\label{explicitP}
P(\vn,\vm) = \sum_{\s,\r \in S_N}\prod_{j=1}^N (V_{\s_j,j}V^*_{\r_j,j}\hS_{\r_j\s_j}).
\end{align} 
Applying the inclusion-exclusion principle to this equation, we obtain an algorithm to compute the probability \cite{tichy2015sampling} that is similar to Ryser's formula \cite{ryser1963combinatorial}:
\begin{align}
\label{ieryser}
P(\vn,\vm) = \sum_{\substack{S,R \subseteq \\ \{1,\dots ,N \} } } (-1)^{|S|+|R|} \prod_{j=1}^N \sum_{\substack{r\in R\\ s\in S} } V_{sj}V^*_{rj}\hS_{rs},
\end{align}
($|S|$  represents the number of elements for a given set $S$),
or equivalently,
\begin{align}
\label{pryser}
P(\vn,\vm)= \sum_{\vec{x},\vec{y} \in \{0,1\}^N}&(-1)^{\sum_i^N x_i + \sum_i^N y_i} \nn \\
&\times  \prod_{j=1}^N \Big[\sum_{r,s=1}^N V_{sj}x_sV_{rj}^*y_r \hS_{rs} \Big].
\end{align} 
The above identities directly result in the following feature for two extremal situations of $\hs$:
\begin{theorem}
	The transition probability $P(\vn,\vm)$ is the same for all maximally coherent $\hs$, i.e, $P(\vn,\vm)$ is equivalently hard to simulate for the cases.	If $\hs$ is incoherent, $P(\vn,\vm)$ is approximated efficiently.
\end{theorem}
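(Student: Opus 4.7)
The plan is to plug the two extremal forms of $\hs$ directly into Eq.~\eqref{transitionprobability} and watch the sum over $S_N$ collapse to a well-studied BS instance in each case.

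For the maximally coherent half, I would start from the form given in Theorem 2, $\hs_{ij} = e^{i(\th_i - \th_j)}/N$, which via $\hs = \hS^{*}/N$ translates to $\hS_{ij} = e^{i(\th_j - \th_i)}$. The key observation is that the Gram factor telescopes along every permutation,
\begin{align}
\prod_{i=1}^N \hS_{i,\s_i} = \exp\Big( i \sum_{i=1}^N (\th_{\s_i} - \th_i) \Big) = 1,
\end{align}
since $\s \in S_N$ merely reindexes the $\th$'s. Substituting into Eq.~\eqref{transitionprobability} gives
\begin{align}
P(\vn,\vm) = \sum_{\s \in S_N} \perm(V \odot V^{*}_{\s,\mathbb{I}}) = |\perm(V)|^2,
\end{align}
where the last equality is exactly the fully-indistinguishable limit $\hS_{ij}\equiv 1$ already worked out in the paper. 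Because the phases $\{\th_i\}$ drop out globally, every maximally coherent $\hs$ yields the same probability, so classical simulation reduces in each case to the standard Aaronson--Arkhipov task of computing $|\perm(V)|^2$ for the random submatrix $V$ — i.e., all such instances are equivalently hard.

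For the incoherent half, diagonality of $\hs$ together with $\hs_{ii} = 1/N$ forces $\hS = \mathbb{I}$. Only the identity permutation survives the product $\prod_i \hS_{i,\s_i}$ in Eq.~\eqref{transitionprobability}, leaving $P(\vn,\vm) = \perm(V \odot V^{*}) = \perm(|V|^2)$, a permanent of an entrywise nonnegative matrix. This admits a fully polynomial-time randomised approximation scheme (Jerrum--Sinclair--Vigoda); equivalently, the output distribution factorises into $N$ independent categorical draws with routing weights $|V_{ij}|^2$, the distinguishable balls-in-bins picture, so the sampling problem itself has a direct classical solution.

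I do not expect a genuine technical obstacle, since both halves are short computations once the parametrisation is in place. The two points that need some care are (i) recognising that the ``maximally coherent'' condition on $\hs$ forces the coboundary phase structure $\th_i - \th_j$ (the rank-one structure of a pure maximally coherent state, already encoded in Theorem 2), and (ii) stating explicitly the operational reading of ``equivalently hard'' (a reduction to the random-permanent problem) and of ``approximated efficiently'' (FPRAS or the direct independent-sampling argument), since these are what give the theorem its complexity-theoretic content rather than a mere algebraic identity.
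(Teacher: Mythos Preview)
Your proposal is correct and follows essentially the same route as the paper: the paper likewise parametrises the maximally coherent $\hs$ by pure-state phases, observes $\prod_i \hS_{i,\s_i}=1$ for every $\s$, and for the incoherent case reduces to $\perm(|V|^2)$ and invokes the Jerrum--Sinclair--Vigoda FPRAS. Your version is slightly more explicit (writing out the collapse to $|\perm(V)|^2$ and articulating the complexity-theoretic readings), but the argument is the same.
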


\begin{proof}
	$\hs$ is maximally coherent if and only if $\hs = |\p\>\<\p|$ with $|\p\> = \frac{1}{\sqrt{N}}\sum_{i}e^{i\th_i}|i\>$ \cite{bai2015maximally}, or
	$\hS_{ij} = e^{i(\th_i-\th_{j})}$. Therefore, from the relation $ \prod_{i}\hS_{i\s_i} = \exp[i\sum_i(\th_i-\th_{\s_i})]=1$ for all $\s$, we can see that $P(\vn,\vm)$ is the same for all maximally coherent $\hS$. On the other hand, the only possible incoherent state of $\hs$ is when $\hs _{ij} = \d_{ij}/N$. Then $P(\vn,\vm)$ becomes a permanent of the nonnegative matrix (the unitary condition of $M$ moreover makes $M\odot M$ a doubly stochastic matrix), which is efficiently approximated \cite{jerrum2004polynomial}.
\end{proof}	

For arbirarily distinguishable photons, the classical runtime $\cT$ for the simulation with the algorithm Eq. \eqref{pryser} is given by $\cT =2^{2(N-1)}N^3$. However, this runtime can decrease when some elements of $\hS$ are zero, i.e., $\mathcal{N}(\hs) < N^2$. Indeed, the functional form of Eq. \eqref{pryser} shows that the runtime becomes 
\begin{align}
\cT = (2^{2(N-1)}N) \mathcal{N}(\hs)
\end{align} since the number of arithmetics in the bracket of Eq. \eqref{pryser} is $\mathcal{N}(\hs)$. This shows that the depletion of coherence decreases the computational cost of the exact simulation (see Figure 2 for $N=8$ example). 

\begin{figure}[htbp]
	\centering
	\includegraphics[width=7.5cm]{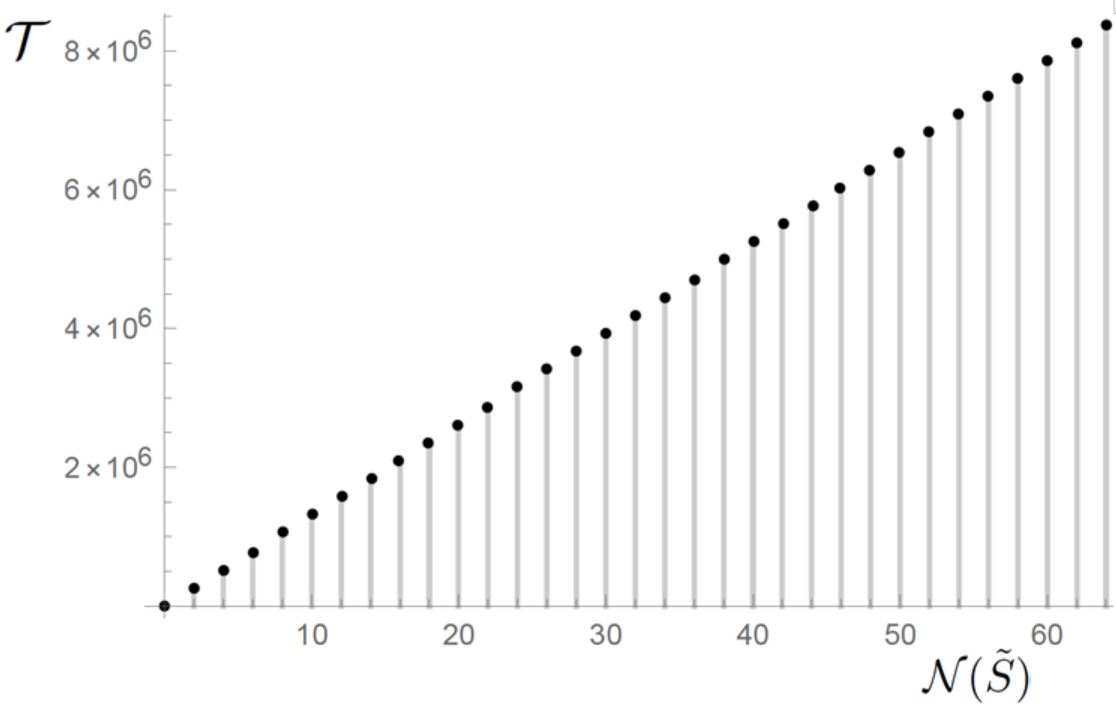}
	\caption{Runtime of exactly simulating $P(\vn,\vm)$ along $\mathcal{N}(\tilde{\hS})$ for $N=8$. $\cT$ has the maximal value $2^{14}8^3=8388608$ when $\mathcal{N}(\tilde{\hS}) = 8^2$. $\cT$ decreases monotonically as $\mathcal{N}(\tilde{S})$ decreases.}
\end{figure}

On the other hand, this pattern has a singular point when particles are completely indistinguishable, which correspond to the case when $\hs$ is maximally coherent. Then the transition probability becomes $|\perm(V)|^2$, the absolute square of the transition amplitude. For this case, the runtime is $2^{(N-1)}N^2$ from Ryser's formula. This abrupt decrease of the computational cost is due to the symmetry of $\hs$ for the maximally coherent case, which permits us find an algorithm with a shorter runtime.
To sum up, the classical runtime for the exact simulation is affected by two factors, the symmetry of $\hs$ (a classcal factor) and coherence (a quantum factor). \emph{When the effect of the symmetry disappears, the depletion of coherence decreases the computational time cost of the exact simulation.}

Another approach to reduce the number of arithmetic operations is to break up the subsets $S$ and $R$ defined in Eq. \eqref{ieryser} so that the corresponding submatrices of $\hS$ become zero. In other words, for some $S=\{s_1,\dots, s_\a \}$ and $R=\{r_1,\dots ,r_\b \}$, if $S_{s_i r_j}=0$ for all $i$ and $j$, we do not need to include the summation in the algorithm, which results in a shorter runtime, not significantly though. A specific example for $N=4$ is given in Appendix \ref{alt}.


\section{Conclusions}\label{con}

In this work, we showed that the partial distinguishability of photons in LON can be understood from the framework of coherence resource theory. We introduced the concept of \emph{permuted genuinely incoherent operation} (pGIO) that transforms one partial distinguishability matrix $\hS$ to another. By delineating the role of three pG coherence monotones ($\mathcal{N}(\hs)$, $\perm(|\hs|)$ and $J_\s$) in partially distinguishable boson sampling,  we presented evidence of the assumption that the coherence of partial distinguishability affects the computational time complexity of a partially distinguishable scattering process of linear optical network.

Our current work can develop in various directions.
For example, $\mathcal{N}$ decreases the runtime for exact simulation of transition probability with our current algorithm; however, the runtime is still an exponential function of $N$. There might exist more efficient algorithms that exploit the coherence of partial distinguishability to reduce runtime.
Also, the application of our analysis to the continuous BS system \cite{lund2014boson, rahimi2015, rahimi2016sufficient, huh2017vibronic, hamilton2017gaussian} would provide more rigorous conditions for various types of BS to be computationally hard.

\section*{Acknowledgements}

The authors are grateful to Jelmer Renema for his helpful comments. This work is supported by Basic Science Research Program through the National Research Foundation of Korea (NRF) funded by the Ministry of Education, Science and Technology (NRF-2015R1A6A3A04059773,  NRF-2019R1I1A1A01059964).

\appendix
\section{The upper bound of transition probability with $\perm(|\hS|)$}\label{bound}

A slight modification of Eq. (51) in Ref. \cite{tichy2015sampling} gives
\begin{align}
\label{probbound}
P(\vn,\vm) &= \Big|\sum_{\s\in S_N }\perm(V\odot V^*_{\s, \mathbb{I}})\Big(\prod_i \hS_{i, \s_i} \Big)\Big| \nn \\
&\le \perm(V\odot V^*)\sum_{\s \in S_N}\Big| \prod_i \hS_{i,\s_i}\Big|
\equiv  P_\mathbb{I}[\perm(|\hS|)].
\end{align} 
where the inequality comes from the relation $\big|\perm(V\odot V^*_{\mathbb{I},\s} )  \big| \le \perm(V\odot V^*)\equiv  P_{\mathbb{I}}$ for any permutation $\s$. 
Using the monotonicity of $\perm(|\hS|)$ from Theorem \ref{monotones}, we can see that a pGIO on $\hS$ decreases the upper bound of the transition probability $P(\vn,\vm)$. 
The unitarity condition of $V$ in Eq. \eqref{probbound} provides a more rigorous upper bound condition for the equation. Indeed, since $V\odot V^*$ is a unistochastic matrix (a doubly stochastic matrix whose elements are the absolute squares of the elements of a unitary matrix),  the upper and lower bounds for $P_{\mathbb{I}} = \perm(V\odot V^*)$ are given using the result in Ref. \cite{gurvits2014bounds} by 
\begin{align}\label{gurvitsbounds}
F(V\odot V^*) \le \perm(V\odot V^*) \le 2^NF(V\odot V^*) , 
\end{align}
where $F(V\odot V^*) \equiv \prod_{i,j=1}^N(1-|V_{ij}|^2)^{1-|V_{ij}|^2 }$. Hence, Eq. \eqref{probbound} can be rewritten as
\begin{align}
P(\vn,\vm) \le 2^N F(V\odot V^*)[\perm(|\hS|)]. 
\end{align}

\section{Alternative algorithm example}\label{alt}

($N=4$) The runtime using Eq. \eqref{ieryser} is $(2^{6}4^3)/2=2048$. However, when $S_{13}=S_{24}=S_{34}=0$, the summations with the following $(R,S)$ become zero:
\begin{align}
(R,S)=&(\{1\}, \{3 \}),\quad (\{2\}, \{4 \}), \nn \\
&(\{3\}, \{4 \}),\quad (\{3\}, \{1,4 \}), \quad (\{4\}, \{2,3 \}),
\end{align}
which contains 4, 4, 8, and 8 terms, respectively. Therefore, the resulting runtime decreases to 2024.

\bibliographystyle{unsrt}
\bibliography{pdbs2}

\end{document}